\documentclass[aps,10pt,prx,twocolumn,showpacs,floatfix,superscriptaddress,longbibliography,nofootinbib
]{revtex4-2}
\pdfoutput=1
\setcitestyle{sort&compress}
\usepackage[utf8]{inputenc}
\usepackage[english]{babel}
\usepackage[T1]{fontenc}
\usepackage{amsmath}
\usepackage{amsthm}
\usepackage{lipsum}
\usepackage{amssymb}
\usepackage{dsfont}
\usepackage{graphicx}
\usepackage{ragged2e}
\usepackage{caption}
\usepackage{subcaption}
\usepackage[margin=1in]{geometry}
\usepackage{tikz}
\usepackage{lipsum}
\usepackage{mathrsfs}
\usepackage{braket}
\usepackage{mathtools}
\usepackage{quantikz}
\usepackage{tikz}
\usepackage{ulem}

\usepackage{bbold}
\usepackage{makecell}

\newtheorem{proposition}{Proposition}
\usepackage{hyperref}
\begin{document}

\title{Security bounds for unidimensional discrete-modulated CV-QKD: a Gaussian extremality approach}

\author{John A. Mora Rodríguez}
\email{john.rodriguez@fbter.org.br}
\affiliation{QuIIN - Quantum Industrial Innovation, EMBRAPII CIMATEC Competence Center in Quantum Technologies, SENAI CIMATEC, Av. Orlando Gomes 1845, Salvador, BA, Brazil, CEP 41650-010.}
\affiliation{Instituto de Matemática, Estatística e Computação Científica, Universidade Estadual de Campinas, CEP 13083-859, Campinas, Brazil}

\author{Maron F. Anka}
\email{maron.anka@fbter.org.br}
\affiliation{QuIIN - Quantum Industrial Innovation, EMBRAPII CIMATEC Competence Center in Quantum Technologies, SENAI CIMATEC, Av. Orlando Gomes 1845, Salvador, BA, Brazil, CEP 41650-010.}

\author{Leonardo J. Pereira}
\affiliation{QuIIN - Quantum Industrial Innovation, EMBRAPII CIMATEC Competence Center in Quantum Technologies, SENAI CIMATEC, Av. Orlando Gomes 1845, Salvador, BA, Brazil, CEP 41650-010.}

\author{Micael A. Dias}
\email{mandi@dtu.dk}
\affiliation{Department of Electrical and Photonics Engineering, Technical University of Denmark, 2800  Lyngby, Denmark.}

\author{Alexandre B. Tacla}
\email{alexandre.tacla@fieb.org.br}
\affiliation{QuIIN - Quantum Industrial Innovation, EMBRAPII CIMATEC Competence Center in Quantum Technologies, SENAI CIMATEC, Av. Orlando Gomes 1845, Salvador, BA, Brazil, CEP 41650-010.}

\begin{abstract}
Unidimensional (1D) Gaussian-modulated continuous-variable quantum key distribution protocols have been proposed as a way to simplify implementation and reduce costs through single-quadrature modulation, requiring only one modulator while maintaining compatibility with standard optical infrastructure. Here, we determine security bounds for 1D discrete-modulated protocol under the Gaussian extremality assumption by extending the method of Ghorai \textit{et al.} [Phys. Rev. X 9, 021059 (2019)]. We establish the appropriate symmetry arguments to extend the method to the 1D discrete-modulated case, define the physicality zone in which the protocol is allowed to operate, and prove security against collective attacks in the asymptotic regime via semidefinite programming. Our analysis for uniformly distributed coherent states reveals a fundamental limitation: the Gaussian extremality assumption systematically overestimates Eve's information with increasing constellation size, yielding bounds so conservative that secure key extraction becomes impossible for constellations larger than four states, even under ideal conditions. This overestimation worsens with excess noise and restricts viable modulation amplitudes to impractically small values. Unlike two-dimensional (2D) protocols, where Gaussian extremality improves with constellation size, 1D protocols lack the growing phase-space isotropy required for the approximation to remain tight as the constellation grows. Our results expose these limitations and highlight the necessity of alternative methods or optimized non-uniform constellation designs for this class of protocols.
\end{abstract}

\maketitle

\section{Introduction}
\label{introduction}

Quantum key distribution (QKD) is one of the most mature technological applications of quantum information science, enabling information-theoretic secure key distribution between two trusted parties \cite{pirandola2020advances,zhang2024continuous,usenko2025continuous,anka2026introductory,kubala2026advanced}. In particular, continuous-variable (CV) QKD stands out as a promising candidate for real-world implementation due to its compatibility with current optical telecommunications infrastructure \cite{zhang2024continuous}. This off-the-shelf technology enables the use of standard apparatus, such as homodyne detectors, thereby reducing the implementation cost of CV-QKD systems compared to their discrete-variable QKD counterparts, which rely on (more expensive) single-photon detectors \cite{pirandola2020advances}. 

There are basically two families of CV-QKD protocols: Gaussian and discrete modulation formulations. Even though Gaussian-modulated (GM) protocols are the most advanced regarding security analyses \cite{diamanti2015distributing}, their experimental realization faces challenges in both hardware and classical post-processing. The near-continuous modulation requires high-precision modulators and makes information reconciliation computationally demanding due to the very large alphabet size. The development of discrete-modulated (DM) CV-QKD protocols addresses these issues by employing finite constellations of Gaussian states aiming to approximate Gaussian modulation, which significantly simplifies classical postprocessing, while remaining compatible with practical experimental implementations \cite{leverrier2009unconditional,becir2012continuous,bradler2018security}. The flexibility in constellation design, including phase-shift keying (PSK), amplitude and phase-shift keying (APSK), and quadrature amplitude modulation (QAM), allows for optimization of secure key generation efficiency under realistic channel conditions while maintaining compatibility with coherent optical communication standards \cite{almeida2021secret,almeida2023reconciliation,diamanti2024,pan2022experimental, liao2025continuous}.

However, proving the security of DM protocols presents greater complexity than in the Gaussian-modulated case. In GM protocols, the optimal eavesdropping attack is known to be Gaussian \cite{Garcia2006,Navascues2006}, which significantly simplifies the security analysis. For DM protocols, the optimal attack remains unknown, requiring more sophisticated approaches that do not rely on specific channel models---a departure from earlier methods that assumed simplified scenarios such as pure-loss \cite{Heid2006,Sych2010,Kanitschar2022} or linear channels \cite{Leverrier2011, papanastasiou2018}.

Early security analyses for DM protocols were developed considering only simplified constellations with few states \cite{Zhao2009,bradler2018security,Leverrier2011}. The first attempt to establish a treatment for general constellations was developed in 2019 by Ghorai \textit{et al.} for QPSK modulation \cite{Ghorai2019}, which was later generalized by Denys \textit{et al.} in Ref.\cite{Denys2021} to arbitrary constellation shapes and sizes under the Gaussian extremality assumption \cite{wolf2006}. In that same year, Lin, Upadhyaya, and Lütkenhaus developed an alternative method that avoids the Gaussian extremality argument that is valid for arbitrary constellations, achieving tighter bounds closer to the GM case \cite{lin2019asymptotic}. However, this method comes with a significantly higher computational cost, especially for constellations beyond QPSK \cite{usenko2025continuous}.

In this work, we extend the method of Ghorai \textit{et al.} \cite{Ghorai2019} to unidimensional (1D) discrete modulation of coherent states, relying on the Gaussian extremality assumption. While this approach generally overestimates the eavesdropper's capabilities and may thus yield looser security bounds than the numerical method of Lin \textit{et al.} \cite{lin2019asymptotic}, its computational tractability makes it a valuable framework for rapidly establishing security bounds across different constellation designs and identifying configurations that lead to simpler and more practical implementations. In particular, 1D modulation of coherent states offers the experimental advantage of requiring only a single modulator, reducing both system complexity and cost compared to conventional two-dimensional (2D) modulation schemes \cite{Usenko2015}. We note that during the completion of this article, Wu \textit{et al.} \cite{wu2026discretely} independently published a security analysis of 1D DM coherent-state protocols using the method of Lin \textit{et al.}, achieving tighter bounds in both the asymptotic and finite-size regimes while confirming the practical relevance of this class of protocols. However, this method is considerably more computationally expensive \cite{usenko2025continuous} than using the Gaussian extremality theorem, which motivates further investigation of more efficient approaches. Our complementary approach thus provides insights into the limitations of the Gaussian extremality assumption in this setting.

The first 1D GM CV-QKD protocol was introduced by Usenko and Grosshans in Ref.\cite{Usenko2015}, in which only the $\hat{q}$ quadrature is modulated, unlike conventional schemes that modulate both quadratures symmetrically. The authors established the security of the protocol against collective attacks for general phase-sensitive Gaussian channels, motivated by key practical advantages: simplified hardware (one modulator instead of two), no need for high-extinction-ratio devices, and reduced implementation costs. The first attempt to extend this protocol to discrete modulation was presented in Ref.\cite{zhao2020unidimensional}. However, it relied on incorrect symmetry assumptions, leading to secret key rates (SKRs) exceeding even the 1D Gaussian case. In this work, we provide the correct mathematical framework for 1D DM coherent-state protocols by establishing the appropriate symmetry arguments for states symmetrically and uniformly distributed along the real line in phase space. We prove the security of the protocol against collective attacks via semidefinite programming (SDP), under the assumption of the Gaussian extremality in the asymptotic regime, and we explicitly characterize the physicality zone in which the protocol operates. Crucially, we demonstrate that the Gaussian extremality assumption — while computationally tractable — systematically overestimates Eve's information in this setting, revealing fundamental limitations of this widely used approach for such 1D DM protocols.

This paper is organized as follows. In Sec.\ref{1dgauss}, we briefly review the original unidimensional Gaussian-modulated CV-QKD protocol and introduce the elements required to extend it to the discrete modulation setting. In Sec.\ref{dm1d}, we formulate the discrete-modulated protocol, exploiting symmetry properties and invoking Gaussian extremality arguments to characterize its security. We also identify the physicality region that ensures the protocol remains experimentally implementable. In Sec.\ref{lowerboundsdp}, we present the SDP formulation used to derive security bounds, and in Sec.\ref{numerical}, we discuss the corresponding numerical results. Finally, our concluding remarks are given in Sec.\ref{conclusion}.

\section{1D Gaussian-modulated protocol}
\label{1dgauss}

 Unidimensional CV-QKD protocols align with the broader goal of simplifying CV-QKD implementations to facilitate real-world use. The 1D GM protocol description from Ref.\cite{Usenko2015} is as follows: Alice generates coherent states by modulating exclusively one quadrature ($\hat{q}$, corresponding to the amplitude quadrature), using a Gaussian random variable with variance $V_{\text{Mod}}$. These states are typically transmitted through a phase-sensitive channel. Bob performs homodyne detection, predominantly measuring the $\hat{q}$ quadrature and sporadically switching to the $\hat{p}$ quadrature in order to obtain statistics about the channel parameters. After an adequate number of runs, Alice and Bob evaluate the security and derive a secret key from the $\hat{q}$ quadrature data through a reverse reconciliation procedure.

The security proof of this protocol is based on the Gaussian extremality theorem \cite{wolf2006} and includes an additional physicality verification step whenever Eve's interference on the unmodulated quadrature cannot be determined. In this regard, the entanglement-based (EB) protocol equivalence is indispensable. In the equivalent EB version of the protocol, it is assumed that Alice prepares a two-mode squeezed vacuum state with variance $V=\sqrt{V_{\text{Mod}}+1}$ and covariance matrix given by

\begin{footnotesize}
\begin{equation}
    \gamma = \begin{pmatrix}
V & 0 & \sqrt{V^2-1} & 0 \\
0 & V & 0 & -\sqrt{V^2-1} \\
\sqrt{V^2-1} & 0 & V & 0 \\
0 & -\sqrt{V^2-1} & 0 & V
\end{pmatrix}.
\end{equation} 
\end{footnotesize}

Alice then measures one of the modes of this state and sends the other mode to Bob. In the case of 2D Gaussian modulation of coherent states, this process is effectively implemented by a heterodyne measurement. However, for the 1D GM protocol, Alice performs a homodyne measurement, which causes the state sent to Bob to collapse into a squeezed state with covariance matrix $\text{diag}(1/V,V)$. To preserve the equivalence with the prepare-and-measure (P\&M) protocol, the state sent to Bob must be a coherent state, which implies that it is necessary to apply squeezing to Bob's mode with the squeezing parameter $-\log(\sqrt{V})$. In this sense, the covariance matrix of the purification in the EB scheme for unidimensional Gaussian modulation is given by:

\begin{footnotesize}
\begin{equation}
    \gamma_{1D} = \begin{pmatrix}
V & 0 & \sqrt{V(V^2-1)} & 0 \\
0 & V & 0 & -\sqrt{\frac{V^2-1}{V}} \\
\sqrt{V(V^2-1)} & 0 & V^2 & 0 \\
0 & -\sqrt{\frac{V^2-1}{V}} & 0 & 1
\end{pmatrix}.
\end{equation} 
\end{footnotesize}

After Bob's system is sent through a phase-sensitive Gaussian channel, the state shared by Alice and Bob has an associated covariance matrix given by

\begin{footnotesize}
    \begin{align}\nonumber
        &\gamma_{AB} = \\
        &\begin{pmatrix}
\sqrt{1 + V_M} & 0 & \sqrt{T_q V_M} (1 + V_M)^{\frac{1}{4}} & 0 \\
0 & \sqrt{1 + V_M} & 0 & C_p \\
\sqrt{T_q V_M} (1 + V_M)^{\frac{1}{4}} & 0 & 1 + T_q(V_M + \xi_q) & 0 \\
0 & C_p & 0 & V_p 
\end{pmatrix},
    \end{align}
\end{footnotesize}where $T_q$ and $\xi_q$ are, respectively, the channel transmittance and excess noise in the $\hat{q}$ quadrature. The parameter $V_p$ is the output variance of Bob's mode in the $\hat{p}$ quadrature. The parameter $C_p$ is the correlation between Alice's and Bob's modes in the $\hat{p}$ quadrature, which is unknown since Alice does not modulate this quadrature.

The key rate is calculated for the Gaussian state $\rho_{AB}$ with covariance matrix $\gamma_{AB}$ through the computation of its symplectic eigenvalues and the symplectic eigenvalue of the covariance matrix conditioned on Bob's homodyne measurement \cite{Laudenbach2018,anka2026introductory}. The parameter $C_p$ is taken as the parameter within the physicality region determined by the Heisenberg uncertainty principle \cite{Serafini_2005} that minimizes the key rate. 

\subsection*{Extending to Discrete Modulation}\label{sec: extending}

When extending the 1D protocol to discrete modulation, it is necessary to establish a series of key points in the security analysis. The most essential is that Gaussian attacks are not provably optimal in the case of discrete modulation; therefore, a security proof should not depend on a specific channel model. Additionally, if we are going to use the Gaussian extremality theorem as a tool in the security proof, the structure of the covariance matrix of the state shared by Alice and Bob\footnote{The shared state between Alice and Bob $\rho_{AB}$ in the EB scheme is only a mathematical object; Alice and Bob do not have access to it. Additionally, this state depends on the quantum channel. Since the real channel is unknown, the state is also unknown. The objective in the security proof is to optimize the information accessible to Eve over all states $\rho_{AB}$ compatible with the measurement outcome statistics.} in the EB scheme must be modified through an appropriate symmetrization process \cite{leverrier:tel-00451021} in order to simplify the analysis without significantly impacting the SKR. Note that in the case of 2D modulations, the usual symmetrization is performed over the group of Gaussian unitary operators corresponding to real symplectic orthogonal transformations in phase space \cite{leverrier:tel-00451021}. This ensures that the covariance matrix of the state\footnote{In the collective attack scenario where the global state of the protocol is given by $\rho_{AB}^{\otimes n}$.} $\rho_{AB}$ has the following structure:
\begin{equation}\label{CM-SYM}
    \gamma_{AB}^{sym}=\begin{pmatrix}
        V\mathds{1} & Z\sigma_Z\\
        Z\sigma_Z & W\mathds{1}
    \end{pmatrix},
\end{equation}
where
\begin{align*}
V &:= \frac{1}{2} (\langle \hat{q}_A^2 \rangle_{\rho_{AB}} + \langle \hat{p}_A^2 \rangle_{\rho_{AB}}), \\
W &:= \frac{1}{2} (\langle \hat{q}_B^2 \rangle_{\rho_{AB}} + \langle \hat{p}_B^2 \rangle_{\rho_{AB}}), \\
Z &:= \frac{1}{4} \left( \langle \{\hat{q}_A, \hat{q}_B\} \rangle_{\rho_{AB}} - \langle \{\hat{p}_A, \hat{p}_B\} \rangle_{\rho_{AB}}\right).
\end{align*}

Applying this symmetrization process to the state $\rho_{AB}$ makes it isotropic in phase space, enforcing symmetries that are characteristic of a Gaussian state\footnote{This does not imply that the resulting state is Gaussian, only that its statistics become similar to those produced by a Gaussian state \cite{Leverriersymmetrization2011}.}. An important observation is that this symmetrization process can be equivalently implemented through an orthogonal transformation applied directly to Alice and Bob's classical data, a considerably simpler procedure from a practical standpoint. However, the equivalence between these two approaches depends crucially on one detail: Alice and Bob cannot retain information about which specific transformation was applied. In fact, as observed by Leverrier \cite{leverrier:tel-00451021}, it is not necessary for Alice and Bob to physically execute these transformations. The essential requirement is that they must ensure that no information about the original structure or ordering of the data is used in the subsequent steps of the protocol. When this condition is satisfied, the protocol becomes indistinguishable from a scenario where random transformations would have been applied and subsequently forgotten.

The state shared by Alice and Bob in the EB picture has the form $\rho=\left( \mathbb{1}_A \otimes \mathcal{E}_{A^{'} \to B} \right)\left( \, |\Phi\rangle\langle \Phi|_{A A^{'}} \, \right)$, where $\ket{\Phi}_{AA'}=\sum_k\sqrt{p_k}\ket{\psi_k}\ket{\alpha_k}$ is a purification of the average state of the constellation $\tau=\sum_kp_k\ket{\alpha_k}\bra{\alpha_k}$. Here, $\{\ket{\psi_k}\}_k$ is an orthonormal basis, $\ket{\alpha_k}$ is a coherent state, and $\mathcal{E}_{A'\rightarrow B}$ is a completely positive trace-preserving map that models the quantum channel. Note that the measurement processes are not modified by the symmetrized scheme, since Alice performs the projective measurements $\{\ket{\psi_k}\bra{\psi_k}\}_k$ on her half of the entangled state $\ket{\Phi}_{AA'}$, which is equivalent to preparing and sending the corresponding coherent state to Bob in the P\&M scheme, without requiring any squeezing transformation on Bob's mode.

Additionally, we observe that the symmetrization proceedure leading to the covariance matrix structure in Eq.\eqref{CM-SYM} does not apply to the 1D scenario. First, this symmetrization mixes information from the $\hat{q}$ and $\hat{p}$ quadratures through rotations in phase space. However, in the 1D protocol, only the $\hat q $
quadrature is modulated and measured by Bob, breaking the symmetry between quadratures. This contradicts the requirement that no information about the original structure or ordering of the data is used. Furthermore, in the covariance matrix structure of Eq.(\ref{CM-SYM}), the parameter $Z$ (which must be optimized to apply the Gaussian extremality theorem \cite{Ghorai2019,Denys2021}) mixes correlations between both quadratures of Alice's and Bob's modes. This complicates the security analysis because the correlation in the unmodulated quadrature $\langle \hat{p}_A\hat{p}_B\rangle$ is unknown and must be bounded using the physicality constraint of the covariance matrix, analogously to the GM case~\cite{Usenko2015}.

\section{1D discrete-modulated protocol}
\label{dm1d}

In this section, we provide a detailed description of the 1D DM CV-QKD protocol and its security analysis under the Gaussian extremality assumption, employing an appropriate symmetrization procedure. In addition, we show the allowed physicality region in which the protocol can be realized.

\subsection{Gaussian extremality and symmetries assumptions}
\label{assumptions}

In this 1D DM CV-QKD protocol, Alice prepares one of \(2N\) coherent states,
\[
\{|\alpha_k\rangle, |-\alpha_k\rangle\}_{k=0}^{N-1},
\]
where each state \(|\pm\alpha_k\rangle\) is selected with probability \(p_k/2\). The amplitudes are defined such that \(\alpha_0 \in \mathbb{R}\) and \(\alpha_k = r_k \alpha_0\) for \(k = 1, 2, \ldots, N-1\). Alice sends the chosen state through a quantum channel to Bob, who performs a homodyne measurement on the \(\hat{q}\) quadrature.

During the protocol rounds, Bob predominantly measures the modulated \(\hat{q}\) quadrature. However, it is necessary to occasionally collect statistics about the channel properties in the unmodulated quadrature, \(\hat{p}\). Therefore, Bob must sporadically measure the \(\hat{p}\) quadrature \cite{Usenko2015}. Nonetheless, we consider the asymptotic regime, in which the occasional sampling of the $\hat{p}$ quadrature for parameter estimation has a negligible impact on the key rate \cite{Lo2005}.

After a sufficiently large number of rounds, Alice and Bob perform reverse reconciliation \cite{Grosshans2003} on their correlated \(\hat{q}\) quadrature measurements to extract a secret key. The secure key rate is given by the Devetak-Winter bound \cite{DevetakWinter2005} 
\begin{equation}\label{key rate}
K \geq \beta I(X;Y) - \sup \chi(Y;E),
\end{equation}
where \(I(X;Y)\) is the mutual information between Alice’s and Bob’s real-valued classical variables, \(X\) and \(Y\), respectively, \( \beta \in [0,1] \) is a parameter that quantifies the efficiency of the information reconciliation process and characterizes the performance of the error correction \cite{Yang2023}, and \(\chi(Y;E)\) is the Holevo information between Bob's variable \(Y\) and Eve’s system. The supremum is taken over all possible channels that are consistent with the statistics observed by Alice and Bob in the parameter estimation stage. 

Our approach is based on the method originally proposed by Ghorai et al. \cite{Ghorai2019}, here adapted to the 1D protocol. This method allows us to establish, through semidefinite programming, an upper bound on the information that a malicious party can extract, consistent with the observed statistics of Alice’s and Bob’s data. By characterizing the physicality region of the problem and imposing the appropriate constraints within the SDP framework, we compute the SKR. We establish an appropriate symmetrization for the 1D protocol by considering only symmetries compatible with practical implementation. Specifically, we employ a reflection with respect to the axis corresponding to the $\hat{p}$ quadrature in phase space---equivalent to a phase conjugation or time-reversal operation. This transformation can be applied to Alice's and Bob's classical data even though it is not a symplectic operation and therefore has no direct physical realization in phase space. Nevertheless, the implementation of this operation can be simulated if Alice and Bob agree to perform the coordinate transformation $\hat{q}\mapsto -\hat{q}$ and $\hat{p}\mapsto \hat{p}$ (see Ref.\cite[p.~133]{leverrier:tel-00451021}).

Any other orthogonal operation would require information about the unmodulated quadrature, which is not accessible in this setting. As a result, we use a discrete group composed of the identity and the reflection operation in the symmetrization procedure \cite{Leverrier2012}. Consequently, the covariance matrix associated with the state shared by Alice and Bob in the EB version of the protocol \cite{Ghorai2019,Denys2021},  
\begin{widetext}
    \begin{equation}
    \gamma_{AB} :=
\begin{pmatrix}
\langle \hat{q}_A^2 \rangle_{\rho_{AB}} &
\frac{1}{2}\langle \{ \hat{q}_A, \hat{p}_A \} \rangle_{\rho_{AB}} &
\frac{1}{2}\langle \{ \hat{q}_A, \hat{q}_B \} \rangle_{\rho_{AB}} &
\frac{1}{2}\langle \{ \hat{q}_A, \hat{p}_B \} \rangle_{\rho_{AB}} \\[8pt]
\frac{1}{2}\langle \{ \hat{p}_A, \hat{q}_A \} \rangle_{\rho_{AB}} &
\langle \hat{p}_A^2 \rangle_{\rho_{AB}} &
\frac{1}{2}\langle \{ \hat{p}_A, \hat{q}_B \} \rangle_{\rho_{AB}} &
\frac{1}{2}\langle \{ \hat{p}_A, \hat{p}_B \} \rangle_{\rho_{AB}} \\[8pt]
\frac{1}{2}\langle \{ \hat{q}_B, \hat{q}_A \} \rangle_{\rho_{AB}} &
\frac{1}{2}\langle \{ \hat{q}_B, \hat{p}_A \} \rangle_{\rho_{AB}} &
\langle \hat{q}_B^2 \rangle_{\rho_{AB}} &
\frac{1}{2}\langle \{ \hat{q}_B, \hat{p}_B \} \rangle_{\rho_{AB}} \\[8pt]
\frac{1}{2}\langle \{ \hat{p}_B, \hat{q}_A \} \rangle_{\rho_{AB}} &
\frac{1}{2}\langle \{ \hat{p}_B, \hat{p}_A \} \rangle_{\rho_{AB}} &
\frac{1}{2}\langle \{ \hat{p}_B, \hat{q}_B \} \rangle_{\rho_{AB}} &
\langle \hat{p}_B^2 \rangle_{\rho_{AB}}
\end{pmatrix},
\end{equation}
\end{widetext}
can be written in the form
\begin{align}
    \gamma_{sym}&=\frac{1}{2}\left(\gamma_{AB}+(S\oplus S)\gamma_{AB} (S\oplus S)\right)\\
    \nonumber&= \begin{pmatrix}
        \langle \hat{q}_A^2 \rangle_{\rho} & 0 & \langle \hat{q}_A \hat{q}_B \rangle_{\rho} & 0 \\
        0 & \langle \hat{p}_A^2 \rangle_{\rho} & 0 & \langle \hat{p}_A \hat{p}_B \rangle_{\rho} \\
        \langle \hat{q}_A \hat{q}_B \rangle_{\rho} & 0 & \langle \hat{q}_B^2 \rangle_{\rho} & 0 \\
        0 & \langle \hat{p}_A \hat{p}_B \rangle_{\rho} & 0 & \langle \hat{p}_B^2 \rangle_{\rho}
    \end{pmatrix},
\end{align}
where $S=\text{diag}(-1,1)$. 

Given an appropriate choice of the purification $\ket{\Phi}_{AA^{'}}$, we have $\langle \hat{p}_A^2 \rangle_{\rho_{AB}} = 1$ (see Sec.\ref{sec:purification}). Therefore, $\gamma$ can be safely
replaced by $\gamma_{sym}$ when computing the secret key rate, with
\begin{align}\label{CM}
    \gamma_{sym}= \begin{pmatrix}
        V  & 0 & C_q & 0 \\
        0 & 1 & 0 & C_p \\
        C_q & 0 & W & 0 \\
        0 & C_p & 0 & W_p
    \end{pmatrix},
\end{align}
where $V=\langle \hat{q}_A^2 \rangle_{\rho_{AB}}$, $W=\langle \hat{q}_B^2 \rangle_{\rho_{AB}}$, $W_p=\langle \hat{p}_B^2 \rangle_{\rho_{AB}}$, $C_q=\langle \hat{q}_A \hat{q}_B \rangle_{\rho_{AB}}$ and $C_p=\langle \hat{p}_A \hat{p}_B \rangle_{\rho_{AB}}$. For simplicity, we denote \(\gamma_{\mathrm{sym}}\) simply as \(\gamma_{AB}\).   
As mentioned in Sec.\ref{sec: extending}, contrary to the EB scheme of the 1D GM protocol, in the discrete modulation case, it is not necessary to apply the operation of squeezing on Bob's mode. In the following, we explore the physicality region where the protocol can be implemented.

\subsection{Physicality region}
\label{physicalregion}

The parameter \( C_p \) represents the correlation between trusted modes in the \( \hat{p} \) quadrature. Since this quadrature is not modulated, this parameter is unknown. However, its value can still be bounded by the physicality requirements of quantum states, as imposed by the uncertainty principle inequality \cite{Serafini_2005}
\begin{align}
    \gamma_{AB} + i \Omega &\geq 0,
\intertext{or equivalently,}
\nu_- &\geq 1, \label{eq:auto}
\end{align}
where \(\nu_-\) is the smallest symplectic eigenvalue of \(\gamma_{AB}\), given by
\begin{equation}\label{eq:autosimpletico}
    \nu_- = \left(\frac{1}{2}\left(\Delta - \sqrt{\Delta^2 - 4 \det(\gamma_{AB})}\right)\right)^{1/2},
\end{equation}
with
\begin{align}\nonumber
    \Delta &= \det(\gamma_{A}) + \det(\gamma_{B}) + 2 \det(\gamma_{C}) \\\label{eq:delta}
    &= V + W W_p + 2 C_q C_p,
\end{align}
and
\begin{equation}\label{eq:det}
    \det(\gamma_{AB}) = V W W_p - V W C_p^2 + C_q^2 C_p^2 - C_q^2 W_p,
\end{equation}
where 
\begin{equation}
    \gamma_{AB}=\begin{pmatrix}
        \gamma_A & \gamma_C\\
        \gamma_C & \gamma_B
    \end{pmatrix}.
\end{equation}
From Eqs.\eqref{eq:auto} and \eqref{eq:autosimpletico} we obtain
\begin{equation}
    \Delta-\sqrt{\Delta^2-4\det(\gamma_{AB})}\geq 2,
\end{equation}
that is, 
\begin{equation}\label{eq:ineq}
    \det(\gamma_{AB})\geq \Delta-1.
\end{equation}

By substituting Eqs.\eqref{eq:delta} and \eqref{eq:det} into Eq.\eqref{eq:ineq}, we are able to obtain an inequality from which we can derive the physicality bound for the parameter \(C_p\),
\begin{equation}\label{eq:parabola}
    \left(C_p+C_0\right)^2\leq \left(1-\frac{W}{V}W_0\right)\left(W_p-W_0\right),
\end{equation}
\noindent where $C_0=\frac{C_q}{VW-C_q^2}$, $W_0=\frac{V}{VW-C_q^2}$. Appendix~\ref{physregion} presents a detailed derivation of Eq.~\eqref{eq:parabola}.

According to Eq.\eqref{eq:parabola}, the parameter \( C_p \) describes a parabola as a function of the parameter \(W_p\), and lies within the interval \([C_p^{-}, C_p^{+}]\) (see Fig. \ref{fig:physicality_region}), where
\begin{align}\label{eq:cp-} 
    C_p^{\mp}= \mp \sqrt{\left(1-\frac{W}{V}W_0\right)\left(W_p-W_0\right)}-C_0.
\end{align}








\begin{figure*}[t]
    \centering
    \includegraphics[width=0.98\textwidth]{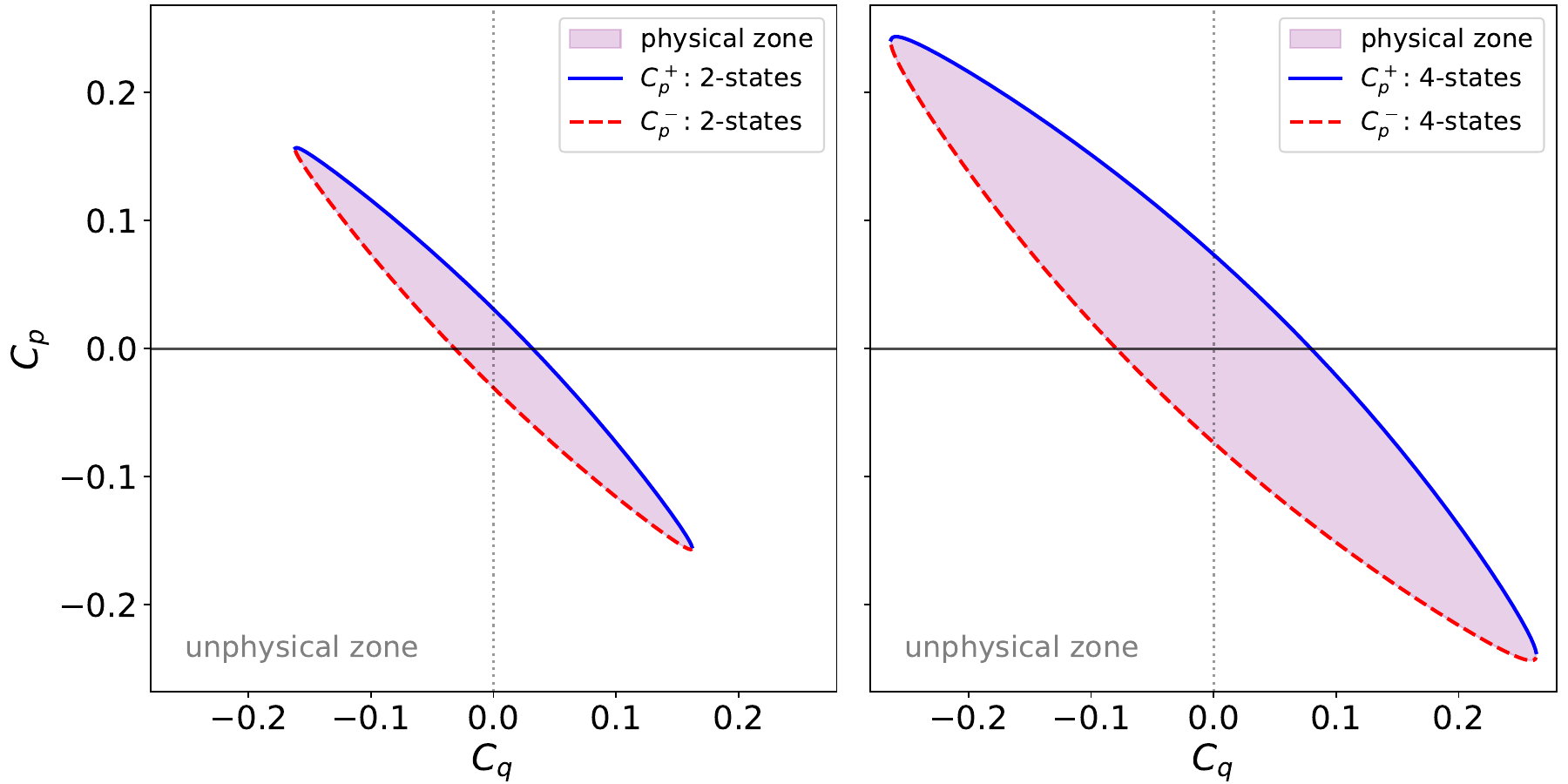}
    \caption{\justifying The physicality region comprises the purple zone and its borders, i.e., the curves for the functions \(C_p^-\) (red dashed curve) and $C_p^+$ (blue solid curve). The space beyond this region is the unphysical zone, where the protocol is not allowed to operate. Parameters: $\alpha_0 = 0.1$, $d = 10$ km, and $\xi = 0$ for 2-states (left) and 4-states (right) constellations.}
    \label{fig:physicality_region}
\end{figure*}

Let us now compute the symplectic eigenvalues of \(\gamma_{AB}\). Note that the symplectic eigenvalues of the covariance matrix are given by the square roots of the solutions to the equation
$x^2 - \Delta x + \det(\gamma_{AB}) = 0$, that is, the symplectic eigenvalues are given by:
\begin{equation}
\nu_{1,2} = \sqrt{\frac{1}{2}\left( \Delta \pm \sqrt{\Delta^2 - 4 \det(\gamma_{AB})} \right)}.
\end{equation}

The physicality condition implies that $\nu_{1,2} \geq 1$. At the boundary of the physicality region,  $\det(\gamma_{AB})= \Delta-1$. Therefore, $\nu_1=\sqrt{\det(\gamma_{AB})}$ e $\nu_2=1$.

The covariance matrix of Alice's state conditioned on Bob's measurement, $\gamma_{A|B}$, is given by \cite{Laudenbach2018, Usenko2015}
\begin{equation}\label{eq:conditionmatrix}
    \gamma_{A|B}=\gamma_A-\gamma_C(\Pi_q\gamma_B\Pi_q)^{-1}\gamma^{T}_C,
\end{equation}
where $\Pi_q = \mathrm{diag}(1, 0)$ selects the measured quadrature, and $M^{-1}$ denotes the pseudoinverse of the matrix $M$. We then obtain
\begin{equation}
    \gamma_{A|B}=\begin{pmatrix}
        V-W^{-1}C_q^2 & 0\\
        0 & 1
    \end{pmatrix},
\end{equation}
with symplectic eigenvalue 
\begin{equation}
    \nu_3=\sqrt{\det(\gamma_{A|B})}=\sqrt{V-\frac{C_q^2}{W}}.
\end{equation}

The Holevo information \(\chi(Y;E)\) computed for the Gaussian state with covariance matrix \(\gamma_{AB}\) is given by
\begin{equation}\label{eq:holevo}
\chi(Y;E) = g\!\left(\frac{\nu_1 - 1}{2}\right)
+ g\!\left(\frac{\nu_2 - 1}{2}\right)
- g\!\left(\frac{\nu_3 - 1}{2}\right),
\end{equation}
where $g(x) = (x + 1)\log_2(x + 1) - x\log_2(x)$.

\vspace{0.5cm}
\section{Lower bound for SKR via SDP}
\label{lowerboundsdp}

We consider the security analysis under the assumption that Eve performs a collective attack \cite{grosshans2007continuous}. In this case, Bob’s measurement process can be described by the classical-quantum (cq) state:  
\begin{widetext}
    \begin{equation}
    \rho_{cq}=\sum_{k=0}^{N-1}\frac{p_k}{2}\left[\Pi_{2k}\otimes \mathcal{E}(|\alpha_k\rangle\langle \alpha_k|)+\Pi_{2k+1}\otimes \mathcal{E}(|-\alpha_k\rangle\langle -\alpha_k|)\right],
\end{equation}
\end{widetext}
where \(\{\Pi_i\}_{i=0}^{2N-1}\) are a set of orthonormal projectors. These projectors are associated with a classical variable that records which state was sent by Alice in each round of the protocol. 

After many rounds of the protocol, Alice and Bob each hold a sequence of random real numbers, \(X\) and \(Y\). The goal is to verify that the correlations between the strings \(X\) and \(Y\) are sufficiently strong to extract a secure key, despite Eve’s potential intervention.

\subsection{Correlation parameters}

In our case, the correlation between Alice's and Bob's data is characterized by three parameters. The first is the correlation between Alice’s classical choices and the states received by Bob, which can be interpreted as a covariance. The remaining two parameters characterize the variances of Bob's received states in both the $\hat{q}$ and $\hat{p}$ quadratures.

To extract the values of these parameters, we adapt to our protocol the expressions derived by Ghorai \textit{et al.} \cite{Ghorai2019}, obtaining
    \begin{equation}
        c_2=\mathrm{Tr}\left[\left((\Pi_0-\Pi_1)\otimes\hat{q}\right)\rho_{cq}\right]
    \end{equation}
    \begin{equation}\label{eq:vx}
    v_q=\mathrm{Tr}\left[\left(\mathds{1}\otimes\hat{q}^2\right)\rho_{cq}\right],
    \end{equation}
    \begin{equation}\label{eq:vq}  v_p=\mathrm{Tr}\left[\left(\mathds{1}\otimes\hat{p}^2\right)\rho_{cq}\right],
    \end{equation}
for the BPSK case. In the general case with \(2N\) states, the parameters \(v_q\) and \(v_p\) are defined in the same way. On the other hand, we can generalize the expression for the parameter \(c_2\), obtaining:
    \begin{equation}
        c_{2N}=Tr\left[\left(\left(\sum_{k=0}^{N-1}\Pi_{2k}-\sum_{k=0}^{N-1}\Pi_{2k+1}\right)\otimes \hat{q}\right)\rho_{cq}\right].
    \end{equation}

For instance, consider a bosonic, phase-sensitive Gaussian channel characterized by transmittances \(T_q\) and \(T_p\), and excess noises \(\xi_q\) and \(\xi_p\) for the quadratures \(\hat{q}\) and \(\hat{p}\), respectively. In this case, for BPSK modulation, the parameters \(c_2\), \(v_x\), and \(v_p\) can be expressed in terms of the channel parameters as follows: 
\begin{align*}
    c_2(T_q,\xi_q)&=2\sqrt{T_q}\alpha_0,\\
    v_q(T_q,\xi_q)&=T_q(V-1)+1+T_q\xi_q,
\end{align*} 
\noindent with $V=4\alpha_0^2+1$ and 
$$v_p(T_p,\xi_p)=1+T_p\xi_p.$$

For a general modulation, the parameters \(c_{2N}\) and \(V\) change. We identify the recursive relationship with respect to the state amplitudes and probability weights in the modulation scheme as follows:
$$c_{2N}(T_q,\xi_q)=2\sqrt{T_q}\alpha_0\sum_{k=0}^{N-1}p_kr_k$$
and
$$V=4\alpha_0^2\left(\sum_{k=0}^{N-1}p_kr_k^2\right)+1.$$

\subsection{Purification and SDP definition}
\label{sec:purification}
Considering the average state of the constellation \(\tau = \sum_{k=0}^{N-1}p_k/2(|\alpha_k\rangle\langle\alpha_k|+|-\alpha_k\rangle\langle-\alpha_k|)\), a purification of \(\tau\) is prepared by Alice in the equivalent EB protocol, and one part of it is sent to Bob through the channel.

Alice is free to choose this purification independently of Eve’s intervention, since the equivalence between the P\&M and EB protocols does not depend on it \cite{lin2019asymptotic}. Therefore, we select the purification that maximizes the correlation between Alice and Bob within the Gaussian approximation method. This choice consequently minimizes the Holevo bound on Eve’s accessible information and yields the tightest security bounds \cite{Ghorai2019,Denys2021}. In this sense, an appropriate choice is given by the state:
\begin{align}
    |\Phi\rangle&=(\mathbb{I}\otimes \tau^{1/2})\sum_{n=0}^{\infty}|n,n\rangle\\
   \nonumber &=\sum_{k=0}^{N-1}\sqrt{\frac{p_k}{2}}(|\psi_{2k}\rangle |\alpha_k\rangle+|\psi_{2k+1}\rangle |-\alpha_k\rangle),
\end{align}
where the states $\ket{\psi_i}$ are defined as in \cite{Denys2021}
\begin{align}
    \ket{\psi_{2k}}=\sqrt{\frac{p_k}{2}}\tau^{-\frac{1}{2}}\ket{\alpha_k},
\end{align}
and
\begin{equation}
    \ket{\psi_{2k+1}}=\sqrt{\frac{p_k}{2}}\tau^{-\frac{1}{2}}\ket{-\alpha_k}.
\end{equation}

Since the channel can be represented using Kraus operators \(\{E_j\}\), which satisfy $\sum_j E_j^{\dagger}E_j=\mathds{1}$ \cite{nielsen2010quantum}, we have that the state shared by Alice and Bob in the EB protocol is given by
\begin{equation}
    \rho_{AB}=\sum_{k,l=0}^{2N+1}\frac{\sqrt{p_kp_l}}{2}|\psi_k\rangle\langle\psi_l|\otimes\left(\sum_j E_j|\alpha_k\rangle\langle\alpha_l|E_j^{\dagger}\right),
\end{equation}
where $|\alpha_{2i}\rangle=|\alpha_{i}\rangle$, $|\alpha_{2i+1}\rangle=|-\alpha_{i}\rangle$ e $p_i=p_{i+1}$ for $i=0,1,\ldots,2N-1$. For this purification, we have \( \rho_A = \mathrm{Tr}_B(\rho_{AB}) = \tau \), and thus
\begin{align*}
    \langle \hat{p}_A^2\rangle_{\rho_{AB}}=Tr(\hat{p}_A^2 \rho_{AB})=Tr(\hat{p}_A^2\tau)=\sum_{k=0}^{N-1}2\frac{p_k}{2}=1.
\end{align*}

By analyzing the covariance matrix in Eq.\eqref{CM}, we are interested in five parameters: the variance of \(\rho_A\) in the $\hat{q}$ quadrature ($V_A$), the variances of \(\rho_B=Tr_A(\rho_{AB})\) in the \(\hat{q}\) and \(\hat{p}\) quadratures ($W$ and $W_P$, respectively), the covariance in the modulated quadrature ($C_q$), and the covariance in the unmodulated quadrature ($C_p$). The parameter $V$ depends only on the modulation format and is independent of the quantum channel between Alice and Bob, and Bob can measure $W=v_q$ and $W_p=v_p$ locally. 

Therefore, the only unknown parameters remaining in the covariance matrix (Eq.\eqref{CM}) are \(C_q\) and $C_p$. This implies that the Holevo information calculated from Eq.\eqref{eq:holevo} is a function of \(C_q\) and \(C_p\), 
that is, \(\chi(Y;E) = f(C_q, C_p)\). 

The parameter \(C_p\) is unknown, but must be constrained by the physicality region given by Eq.\eqref{eq:parabola}. Therefore, for each value of \(C_q\), we can take \(C_p \in [C_p^-, C_p^+]\) that maximizes the Holevo information, thus avoiding underestimating Eve’s ability to obtain information. 
That is, for each \(C_q\), we choose \(C_p^{max} \in [C_p^-, C_p^+]\) such that
\[
    f(C_q, C_p^{max}) = \max_{C_p \in [C_p^-, C_p^+]} f(C_q, C_p).
\]

With this pessimistic choice for the parameter \(C_p\), the Holevo information becomes a function that depends only on the parameter \(C_q\). 
One can check numerically that the function \(C_q \mapsto f(C_q, C_p^{max})\) is decreasing\footnote{
An analytical proof of this property is rather involved; therefore, we restrict ourselves to verifying it numerically.
} 
for \(C_q \ge 0\). 
In what follows, we show another property that fully characterizes the behavior of the function \(f(C_q, C_p)\). 
\begin{proposition}\label{prop:even}
    Within the physicality region, \(f(C_q, C_p^{max})\) is an even function, that is,
    \begin{equation}
        f(C_q, C_p^{max}) = f(-C_q, C_p^{max}).
    \end{equation}
\end{proposition}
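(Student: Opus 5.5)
The plan is to exhibit a symmetry of the whole problem under the map $(C_q,C_p)\mapsto(-C_q,-C_p)$. This map leaves $f$ invariant and carries the physicality interval for $C_q$ onto the interval for $-C_q$. The maximum over $C_p$ is then attained at mirror-image points, so its value is unchanged.

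\textbf{Step 1: the symplectic data are invariant.} I will show that $\nu_1,\nu_2,\nu_3$ are unchanged under $(C_q,C_p)\mapsto(-C_q,-C_p)$, using the explicit formulas already derived.
\begin{itemize}
\item In $\Delta=V+WW_p+2C_qC_p$ the correlations enter only through the product $C_qC_p$, which is invariant.
\item In $\det(\gamma_{AB})=VWW_p-VWC_p^2+C_q^2C_p^2-C_q^2W_p$ they enter only through $C_p^2$, $C_q^2$ and $C_q^2C_p^2$, all invariant.
\item $\nu_3=\sqrt{V-C_q^2/W}$ depends only on $C_q^2$.
\end{itemize}
Hence $\nu_1,\nu_2,\nu_3$ are invariant, and by Eq.~\eqref{eq:holevo} we get $f(C_q,C_p)=f(-C_q,-C_p)$. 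Structurally, this is conjugation of $\gamma_{AB}$ by the local symplectic $-\mathds{1}_A\oplus\mathds{1}_B$ (a $\pi$ phase rotation on Alice's mode). That operation preserves symplectic spectra and commutes with Bob's $\hat q$ homodyne conditioning.

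\textbf{Step 2: the physicality interval is mirrored.} In Eq.~\eqref{eq:parabola} the constants transform as follows under $C_q\mapsto -C_q$:
\begin{itemize}
\item $W_0=V/(VW-C_q^2)$ is invariant;
\item $C_0=C_q/(VW-C_q^2)$ is odd, so $C_0\mapsto -C_0$.
\end{itemize}
The condition $(C_p+C_0)^2\le(1-\tfrac{W}{V}W_0)(W_p-W_0)$ is therefore equivalent to $(-C_p-C_0)^2\le\cdots$ with $C_0$ replaced by $-C_0$. From Eq.~\eqref{eq:cp-}, $C_p^{\mp}(-C_q)=-C_p^{\pm}(C_q)$. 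So the interval $I(-C_q)$ equals $-I(C_q)$, and in particular $(C_q,C_p)$ is physical iff $(-C_q,-C_p)$ is.

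\textbf{Step 3: the maxima coincide.} Combining the two steps,
\[
\max_{C_p\in I(-C_q)}f(-C_q,C_p)=\max_{C_p'\in I(C_q)}f(-C_q,-C_p')=\max_{C_p'\in I(C_q)}f(C_q,C_p'),
\]
where the first equality substitutes $C_p=-C_p'$ and the second uses Step 1. The maximizer for $-C_q$ is $-C_p^{max}(C_q)$, which gives the claimed evenness.

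The only delicate point is bookkeeping: checking that the pseudoinverse formula for $\gamma_{A|B}$ and the square-root branches in $\nu_{1,2}$ behave consistently on the boundary of the region. Since all of these quantities were shown above to depend only on invariant combinations, I expect no real obstacle.
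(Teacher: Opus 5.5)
Your proof is correct and follows the paper's argument: show that $\nu_1,\nu_2,\nu_3$ (hence $f$) are invariant under $(C_q,C_p)\mapsto(-C_q,-C_p)$ and that the physicality region is symmetric, then substitute $C_p\mapsto -C_p$ in the maximum. You make explicit what the paper leaves to a footnote, namely $I(-C_q)=-I(C_q)$ via $C_0\mapsto -C_0$, and your remark that the flip is conjugation by the local symplectic $-\mathds{1}_A\oplus\mathds{1}_B$ justifies both invariances cleanly, including the physicality condition $\gamma_{AB}+i\Omega\ge 0$ itself.
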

\begin{proof}
    For all $C_q$ and $C_p$ in the physicality region (see Fig.~\ref{fig:physicality_region}), we have\footnote{The physicality region is symmetric in the parameters $C_q$ and $C_p$.}   
\[
\nu_i(C_q,C_p)=\nu_i(-C_q,-C_p), \quad \text{for } i=1,2,3.
\]
Therefore, $f(C_q,C_p)=f(-C_q,-C_p)$, and consequently
\begin{align*}
f(-C_q,C_p^{\max})
&=\max_{C_p} f(-C_q,C_p)\\
&=\max_{C_p} f(C_q,-C_p)\\
&=f(C_q,C_p^{\max}).
\end{align*}
\end{proof}
Therefore, \(f(C_q, C_p^{max})\) is symmetric around \(C_q = 0\), meaning that it increases for \(C_q \le 0\) (see Fig.~\ref{fig:Holevo maximum}). Hence, the maximum Holevo information is a decreasing function of \(|C_q|\).

\begin{figure}[h!]
    \centering
    \includegraphics[width=0.48\textwidth]{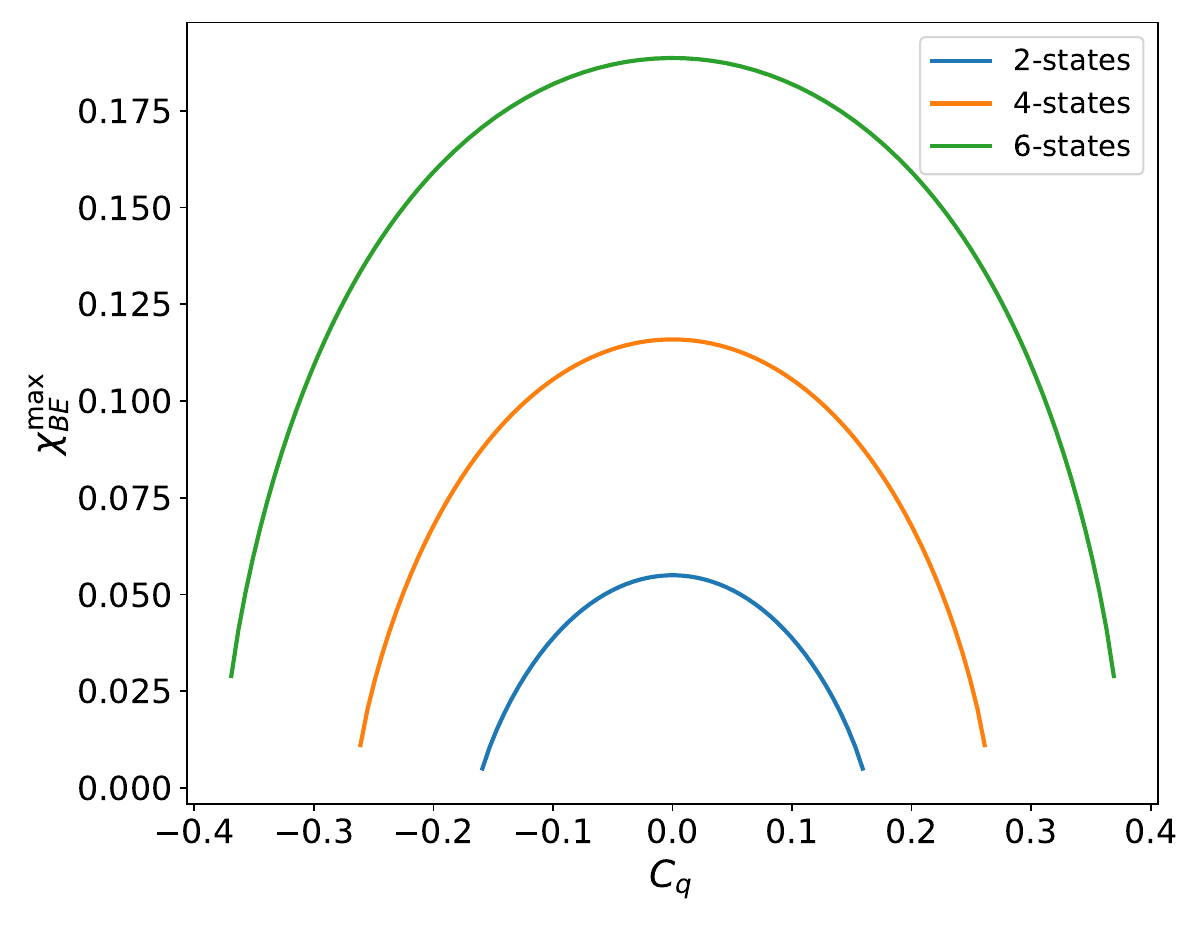}
    \caption{\justifying Maximum Holevo information as a function of the correlation \(C_q\) with \(\alpha_0 = 0.1\), \(d = 10\) km, \(\xi = 0\) for 2 (blue bottom curve), 4 (orange middle curve), and 6-states (green upper curve) constellations.}
    \label{fig:Holevo maximum}
\end{figure}

The parameter \(C_q\) is defined as the expectation value of \(\hat{q}_A \hat{q}_B\) for the state \(\rho_{AB}\), that is,
\begin{equation}
    C_q = \mathrm{Tr}\left[(\hat{q}_A \hat{q}_B) \, \rho_{AB}\right].
\end{equation}
We can obtain an upper bound on the Holevo information (or, equivalently, a lower bound on the secure key rate) 
compatible with any quantum channel whose shared state between Alice and Bob reproduces the fixed values of 
\(c_{2N}\), \(v_q\), and \(v_p\), by determining the minimal value of \(|C_p|\) as a function of these parameters.

As in the analysis presented in Ref.\cite{Ghorai2019}, we consider the orthogonal projector onto the subspace spanned by the coherent states $\{|\alpha_k\rangle, |-\alpha_k\rangle\}_{k=0}^{N-1}$, given by \(P = \sum_{k=0}^{2N-1}|\psi_k\rangle\langle\psi_k|\).

To bound the key rate $K$ in Eq.\eqref{key rate}, we must then find the value of
\begin{equation}
    C_q(\rho) = \mathrm{Tr}\left[ (P \hat{q}_A P \otimes \hat{q}_B) \rho \right]
\end{equation}
that maximizes the Holevo information, under the constraints that \(\rho \succeq 0\), \(\mathrm{Tr}(\rho) = 1\), and
\[
\mathrm{Tr}(B_0 \rho) = v_q, \quad \mathrm{Tr}(B_1 \rho) = v_p, \quad \mathrm{Tr}(B_2 \rho) = c_{2N},
\]
where $B_0=P\otimes \hat{q}^2$, $B_1=P\otimes \hat{p}^2$ and $B_2=(\sum_{k=0}^{N-1}|\psi_{2k}\rangle\langle\psi_{2k}| -|\psi_{2k+1}\rangle\langle\psi_{2k+1}|)\otimes\hat{q}$. Another condition is that the partial trace over Bob's subsystem satisfies
\begin{align}
    \mathrm{Tr}_B (\rho) = \mathrm{Tr}_B \left( |\Phi\rangle\langle\Phi| \right) =\tau.
\end{align}

Therefore, we are interested in the following optimization problem:
\begin{equation}
\begin{split}
\text{minimize:} \quad & |C_q(\rho)|\\ 
\text{subject to:} \quad
& \mathrm{Tr}_B(\rho) = \mathrm{Tr}_B \left( |\Phi\rangle\langle\Phi| \right)=\tau,\\
& \mathrm{Tr}(B_0\rho) = v_q,\\ 
& \mathrm{Tr}(B_1\rho) = v_p,\\ 
& \mathrm{Tr}(B_2\rho) = c_{2N}, \\
& \rho \succeq 0.
\label{sdp}
\end{split}
\end{equation}

We extract the value $C_q^{*}$ such that $|C_q^{*}|$ is the solution of the optimization problem. Thus, $C_p^{*}$ represents the optimal value of $C_q$ that maximizes the Holevo quantity, that is,
\[
f(C_q^{*}, C_p^{max}) = \max_{C_q} f(C_q, C_p^{max}).
\]

Moreover, knowing \( C_q^* \), we can define an ``optimal'' value of \( C_p \) as
\begin{align}
    C_p^* =
\begin{cases}
0, & \text{if } 0 \in [C_p^-, C_p^+]^*,\\ 
C_p^{-*}, & \text{if } C_p^{-*}> 0,\\
C_p^{+*}, & \text{if } C_p^{+*}<0,
\end{cases}
\label{optimalcp}
\end{align}
where \([C_p^-, C_p^+]^*\), \(C_p^{-*}\), and \(C_p^{+*}\) denote the values of \([C_p^-, C_p^+]\), \(C_p^{-}\), and \(C_p^{+}\) corresponding to \(C_q^*\), respectively. The choice of these optimal values for \(C_p\) is made by analyzing the behavior of the Holevo information within the physicality region.

We observe that, analogously to the choice made by Usenko and Grosshans in Ref.\cite{Usenko2015}, a suitable choice for the most pessimistic value of \(C_p\) corresponds to the largest possible negative value when the region allows negative values and to the smallest positive value otherwise. The values of \(C_q\) at which these changes in behavior occur are given by
\begin{equation}
    C_{q_\pm}=\pm\sqrt{\frac{VWW_p-V-WW_p+1}{W_p}}. 
\end{equation}

For \( C_q^* > C_{q_+} \) or \( C_q^*< C_{q_-} \), the pessimistic value (which maximizes Holevo information) \( C_p^{max} \) is close to the optimal value \( C_p^* \), so \( C_p^* \) can be used. Moreover, this optimal value lies on the boundary of the physicality region, allowing the corresponding symplectic eigenvalues to be computed analytically.

For \( C_{q_-} \leq C_q^* \leq C_{q_+} \), we also have \( C_p^* \approx C_p^{max} \); however, the symplectic eigenvalues must be determined numerically.

In this regard, we can compute an upper bound on \(\sup \chi(Y;E)\) by evaluating the Holevo information for the Gaussian state \(\rho^*\) with covariance matrix
\begin{equation}\label{CM: optimized}
    \gamma^* =
    \begin{pmatrix}
        V & 0 & C_q^* & 0\\ 
        0 & 1 & 0 & C^*_p \\
        C_q^* & 0 & v_q & 0 \\
        0 & C^*_p & 0 & v_p
    \end{pmatrix}.
\end{equation}

The extremality property of Gaussian states \cite{wolf2006} and the construction of the optimization problem ensure that
\[
\chi(Y;E)_{\rho^*} \geq \sup \chi(Y;E),
\]
where the supremum is taken over all channels compatible with the parameters \(c_{2N}\), \(v_q\), and \(v_p\).

\section{Numerical results}
\label{numerical}

In this section, we present numerical results for the optimization problem given by Eq.(\ref{sdp}). We compare our findings with known results, such as for the pure-loss and Gaussian phase-insensitive channels, including the 1D GM protocol~\cite{Usenko2015}, with the same transmittance and noise in both quadratures. It is worth mentioning that our method is not restricted to a specific channel model; it is widely applicable to any channel, for instance, the phase-sensitive channel, where the transmittance and noise can be different in each quadrature. In the following, the SDP was implemented in CVXPY (Python) \cite{diamond2016cvxpy} and we used the MOSEK solver \cite{mosek_python_api_2025}, version 11.0.29.

In order to compute the first term of the SKR in Eq.(\ref{key rate}), we use the classical mutual information for the 1D Gaussian case, defined in Ref.\cite{Usenko2015} as
\begin{equation}
    I(X;Y)=\frac{1}{2}\log\left(\frac{V_A}{V_{A|B}}\right),
\end{equation}
while in the discrete-modulated case we have $V_A=V$ and $V_{A|B}=V-C_q^2W^{-1}$. Hence,
\begin{equation}\label{eq: mutual}
    I(X;Y)=\frac{1}{2}\log\left(\frac{V}{V-C_q^2W^{-1}}\right).
\end{equation}

Since the channel is unknown in our problem, the mutual information also depends on the solution to the SDP. In our case, the $C_q$ term in Eq.(\ref{eq: mutual}) is the optimal value $C_q^*$ given by the optimization. In Fig. \ref{mutualholevoinfopure}, we show (a) the mutual and (b) Holevo information, comparing the pure-loss channel (solid curves) and our optimization problem (dashed curves) for 1D constellations with 2, 4, and 6-states, according to $r_k=k+1$ for $\alpha_0 = 0.1$ and uniform probability distribution, yielding variances $V = \{1.04, 1.10, 1.18\}$, respectively. The Gaussian mutual information is a good approximation for the discrete-modulated approach when the variance of the modulated quadrature is low \cite{Wu2010}, which can be observed in Fig. \ref{fig:mutualinfo}. On the other hand, as shown in Fig. \ref{fig:holevoinfo}, the Gaussian extremality assumption significantly overestimates the Holevo information compared to the pure-loss channel, with the overestimation growing with increasing constellation size. As a consequence, there is no positive SKR for constellations with more than 4 states, demonstrating that the Gaussian extremality assumption yields excessively loose bounds for 1D DM protocols. Our results explicitly demonstrate the limitations of this widely-used approximation and underscores the necessity of alternative methods, such as the numerical approach of Lin et al.~\cite{lin2019asymptotic}, for accurate security analysis of 1D DM protocols, as demonstrated in \cite{wu2026discretely}.

\begin{figure}[h]
    \centering

    \subfloat[Mutual information\label{fig:mutualinfo}]{
        \includegraphics[width=0.45\textwidth]{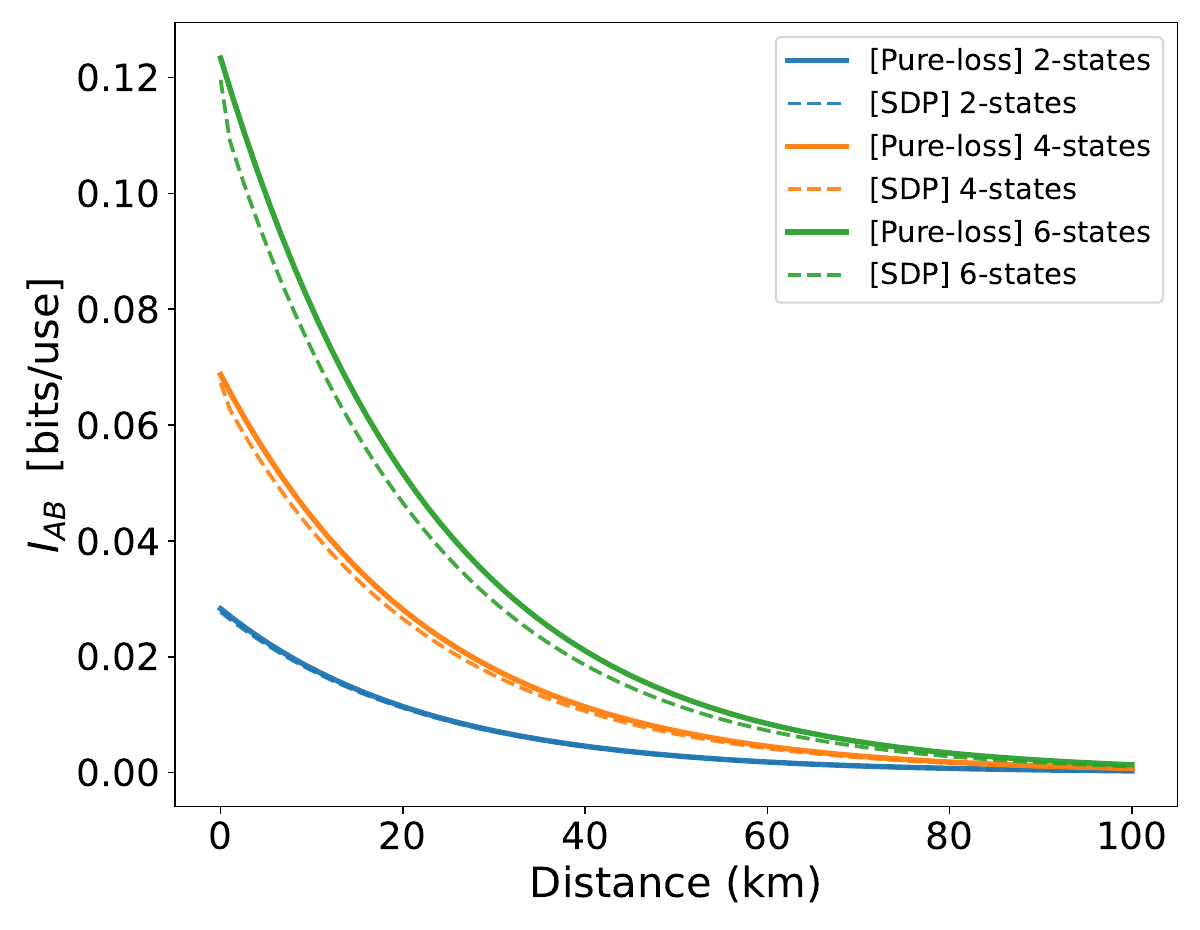}
    }

    \vspace{1.5ex}

    \subfloat[Holevo information\label{fig:holevoinfo}]{
        \includegraphics[width=0.45\textwidth]{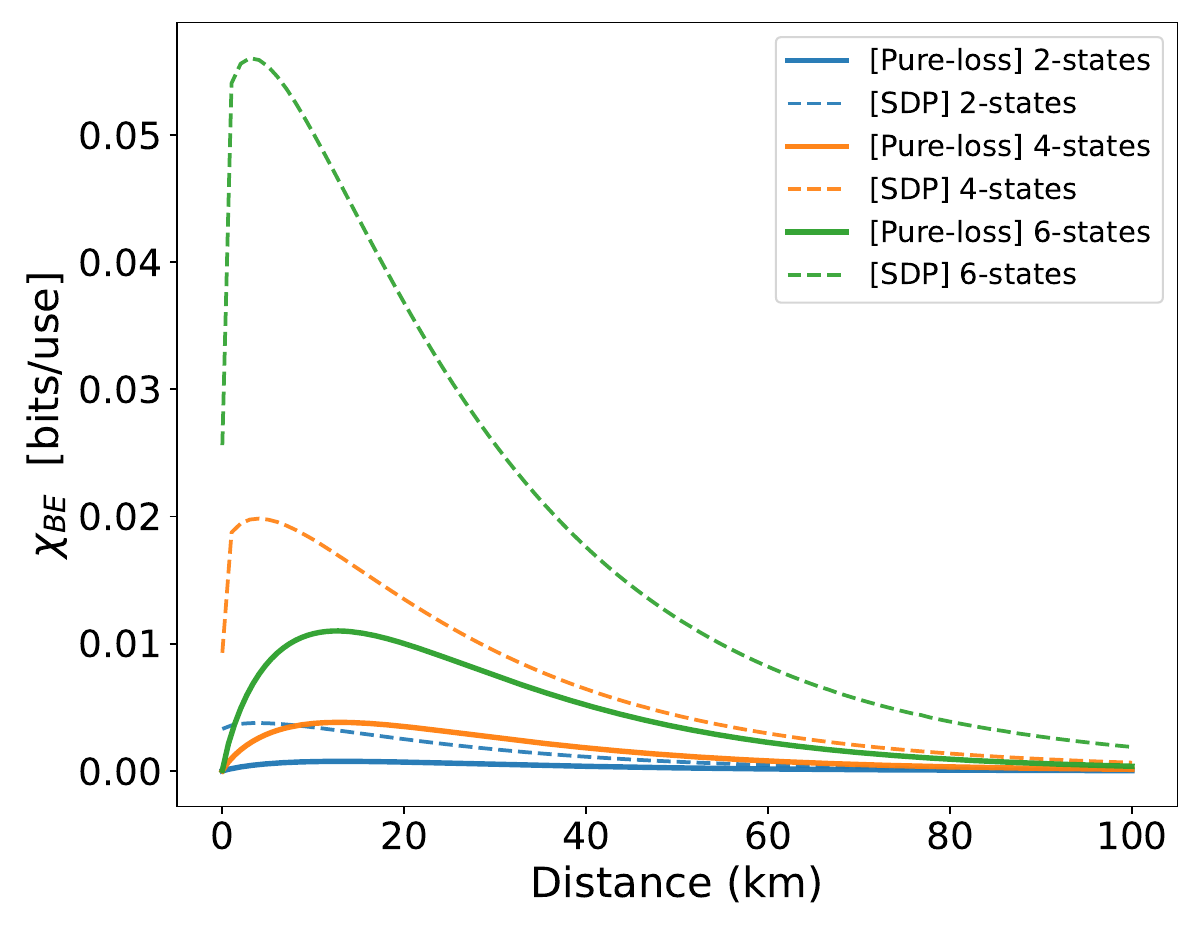}
    }

    \caption{\justifying
    (a) Mutual and (b) Holevo information for the 2-, 4-, and 6-state constellations
    for the pure-loss (solid curves) and SDP (dashed curves) models with variance of $V = \{1.04, 1.10, 1.18\}$, respectively.
    Parameters: $\alpha_0=0.1$, $\xi=0$.
    }

    \label{mutualholevoinfopure}
\end{figure}

In this limited scenario, we show in Fig. \ref{fig:skrcombined} a comparison between the 1D Gaussian, pure-loss, and SDP cases for the 2- and 4-state constellations under the same variance for each constellation, with $\alpha_0 = 0.1$ and $r_1=2$. To focus on the impact of the Gaussian extremality assumption, we assume perfect reconciliation efficiency ($\beta = 1$). The solid and dashed curves represent the Gaussian and pure-loss cases, respectively, while the dotted and dash-dotted curves represent the SDP case with and without the correlation estimation in $\hat p$ quadrature, respectively. We observe that our method provides a good approximation to the 2-state case, showing strong agreement with both the numerical pure-loss case and the Gaussian limit. However, for a 4-state constellation, the Gaussian approximation fails significantly, with the protocol's performance falling below that of the 2-state scheme.

\begin{figure}[h!]
    \centering
    \includegraphics[width=0.45\textwidth]{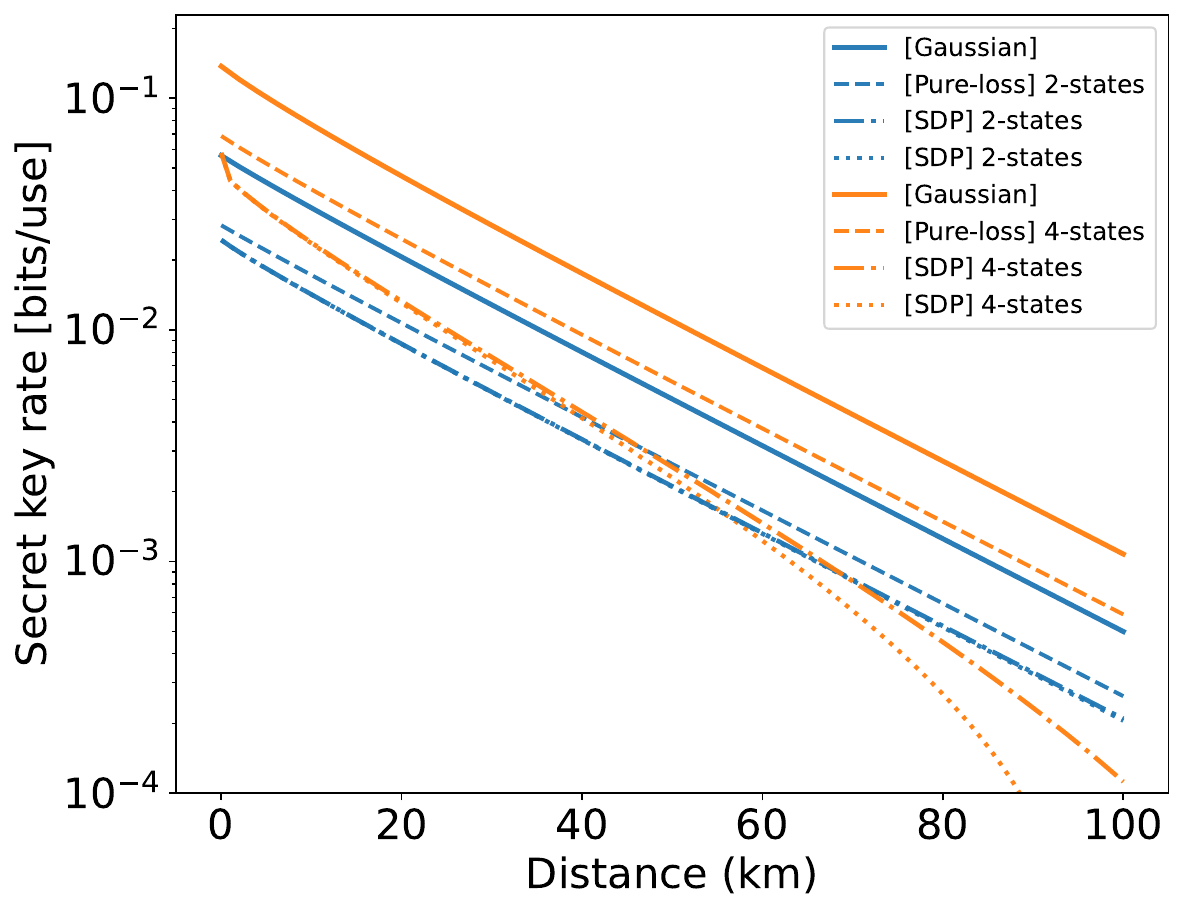}
    \caption{\justifying Comparison between the 1D Gaussian-modulated CV-QKD protocol (solid curves) with the pure-loss (dashed curves) and SDP models for the discrete-modulated protocol with 2- and 4-state constellations. The dotted (dash-dotted) curve represents the SDP model with (without) correlation estimation in \(p\). The variance is \(V = 1.04 (1.1)\) for the 2(4)-state cases and their respective Gaussian curves. Parameters: \(\alpha_0 = 0.1\), \(\xi = 0\), and $\beta = 1$.}
    \label{fig:skrcombined}
\end{figure}

\begin{figure}[h!]
    \centering
    \includegraphics[width=0.45\textwidth]{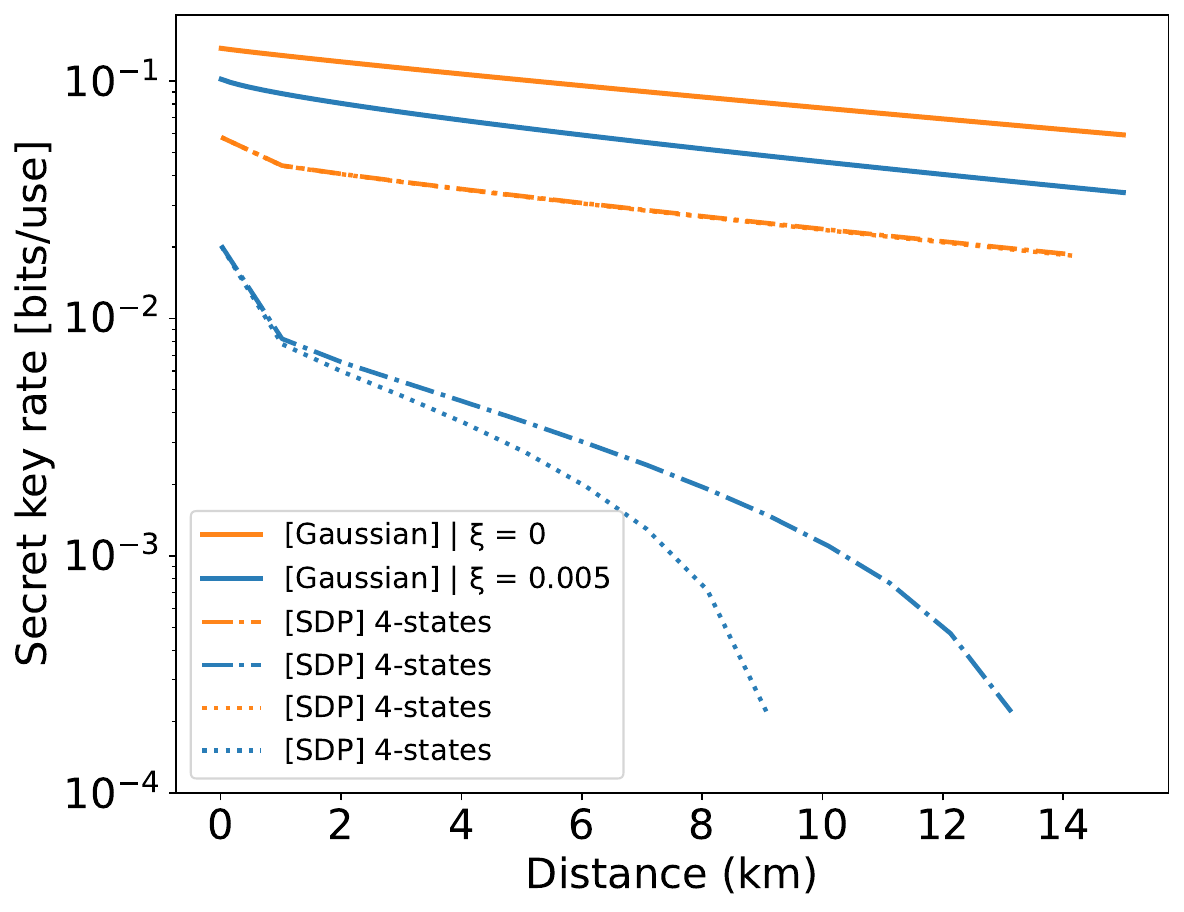}
    \caption{\justifying Comparison between the 1D Gaussian-modulated CV-QKD protocol (solid curves) with the SDP model for the discrete-modulated protocol with 4-states constellation for \(\xi = 0\) (orange curves) and \(\xi = 0.005\) (blue curves) with $\beta = 1$. The dotted (dash-dotted) curve represents the SDP model with (without) correlation estimation in \(p\). The variance is \(V = 1.04 (1.1)\) for the 2(4)-states cases and their respective Gaussian curves.}
    \label{fig:skrcombined2}
\end{figure}

Additionally, we analyze the protocol's performance in the presence of excess noise. As an example, we obtain a non-zero SKR for the 4-state protocol for $\xi = 0.005$. Again, the solid curves are the 1D GM protocol for $\xi = 0$ (orange) and $\xi = 0.005$ (blue), and the dotted (dash-dotted) represent the SDP solution with (without) the correlation estimation in p quadrature. As expected, no secret key can be extracted for a 2-state constellation due to its limited robustness, in agreement with the findings of Ref.\cite{Zhao2009}. However, while one might expect a positive SKR for larger constellations, the overestimation shown in Fig. \ref{fig:holevoinfo} is even worse when some excess noise is taken into account, preventing any SKR extraction.

Although the results presented above correspond to a single value of $\alpha_0$, Fig. \ref{fig:skralpha} reveals that the Gaussian extremality assumption severely constrains the viable range of modulation amplitudes for any given distance. Increasing the number of states improves the SKR only for amplitudes close to the vacuum limit, indicating that the approximation becomes progressively worse for larger constellations at practical modulation amplitudes. This limitation worsens in the presence of excess noise (see Fig. \ref{fig:skralpha2}), confirming that the Gaussian extremality approach is fundamentally limited for this class of protocols.

\begin{figure}[h]
    \centering

    \subfloat[\label{fig:skralpha}]{
        \includegraphics[width=0.45\textwidth]{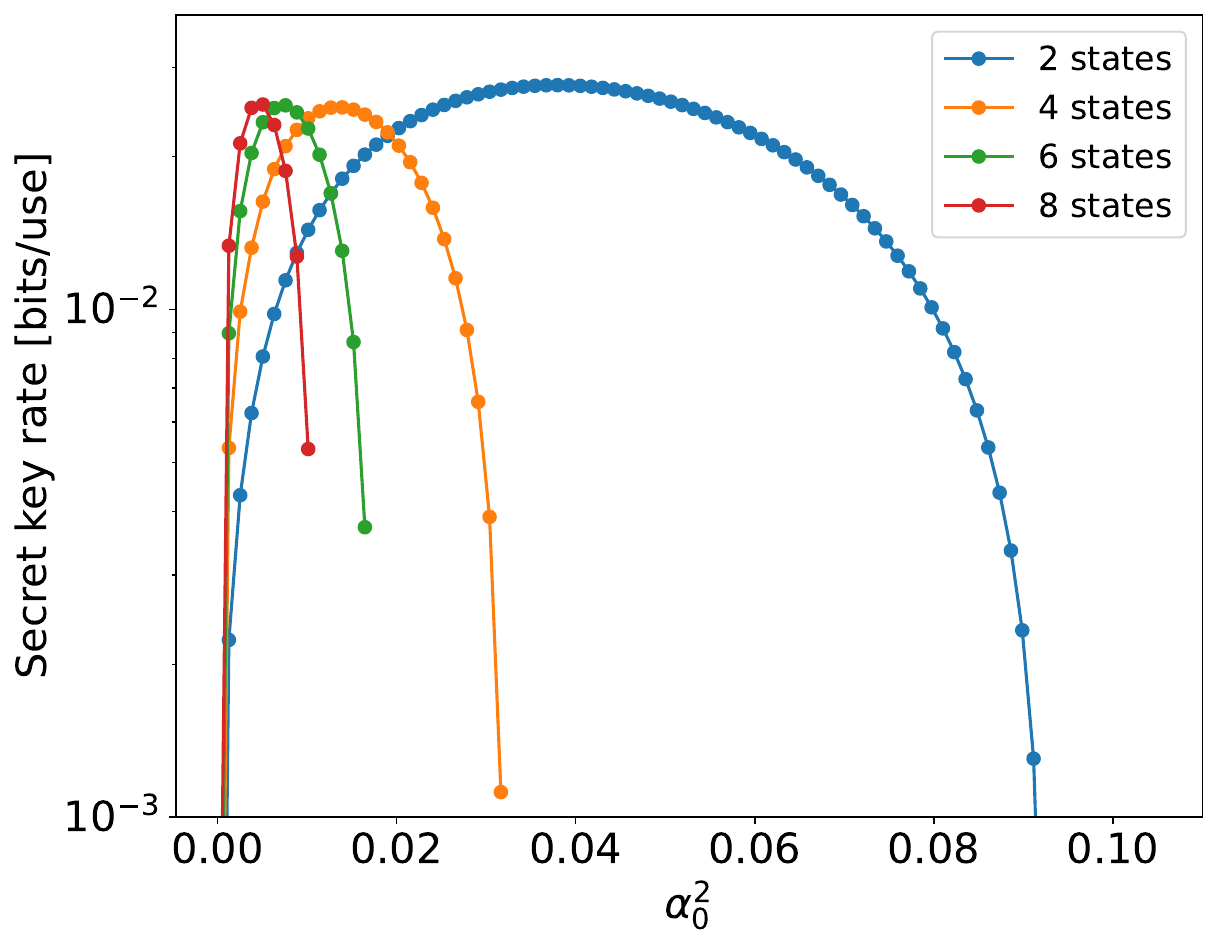}
    }

    \vspace{1.5ex}

    \subfloat[\label{fig:skralpha2}]{
        \includegraphics[width=0.45\textwidth]{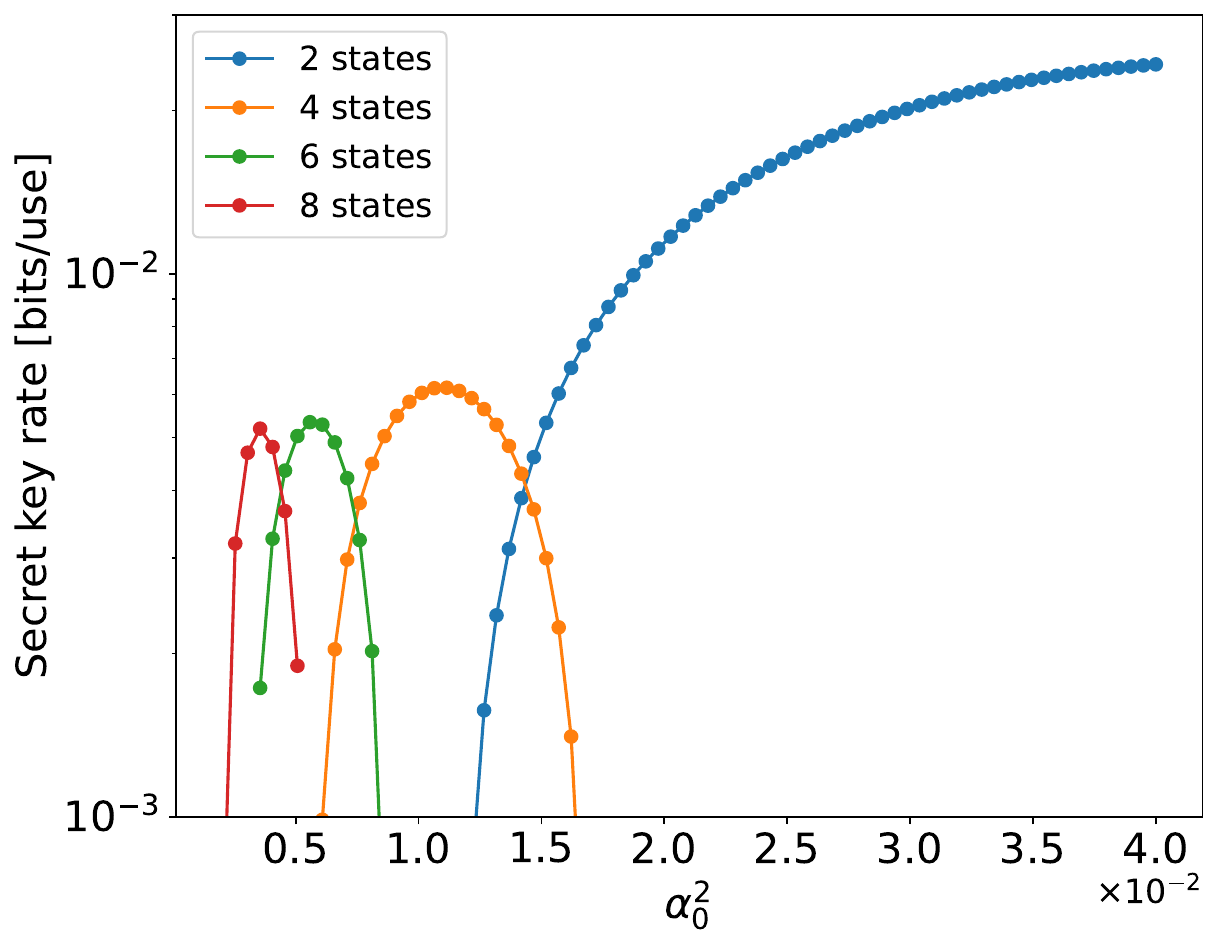}
    }

    \caption{\justifying Secret key rate as a function of $\alpha_0^2$ for the 2, 4, 6, and 8-states protocol for (a) $d = 10$km, \(\xi = 0\), (b) $d = 2$km, \(\xi = 0.005\), and $\beta = 1$. In the low-energy regime, the states are modulated near the origin of phase space, so that the measurement statistics do not exhibit a privileged direction. As a consequence, the effective distribution becomes approximately isotropic and closer to a Gaussian distribution~\cite{Leverrier2011}. This approximation to the Gaussian regime makes the method based on the Gaussian extremality more efficient in this scenario. 
    }

    \label{mutualholevoinfo}
\end{figure}

This behavior is counterintuitive and contrasts with previous results reported for 2D DM CV-QKD protocols, in which increasing the constellation size typically improves performance, for example, Ref.\cite{Denys2021}. We consistently observed this trend at different distances by optimizing the modulation variance for each constellation, as shown in Figs.~\ref{fig:skralphaoptimized} and \ref{fig:skralpha2optimized} (in the pure-loss scenario and in the presence of excess noise, respectively). This confirms that increasing the modulation variance leads to an overestimation of the Holevo information and therefore to lower SKRs, while at low variances the SKR saturates for constellations with a small number of states (in most cases, for two states).

We attempted to improve the performance through probabilistic shaping by assigning a discrete Gaussian probability distribution to uniformly distributed coherent states along the real axis. However, this modification did not yield any improvement, and choosing a probability distribution close to uniform turns out to be optimal (as observed in \cite{wu2026discretely}).

\begin{figure}[h]
    \centering

    \subfloat[\label{fig:skralphaoptimized}]{
        \includegraphics[width=0.45\textwidth]{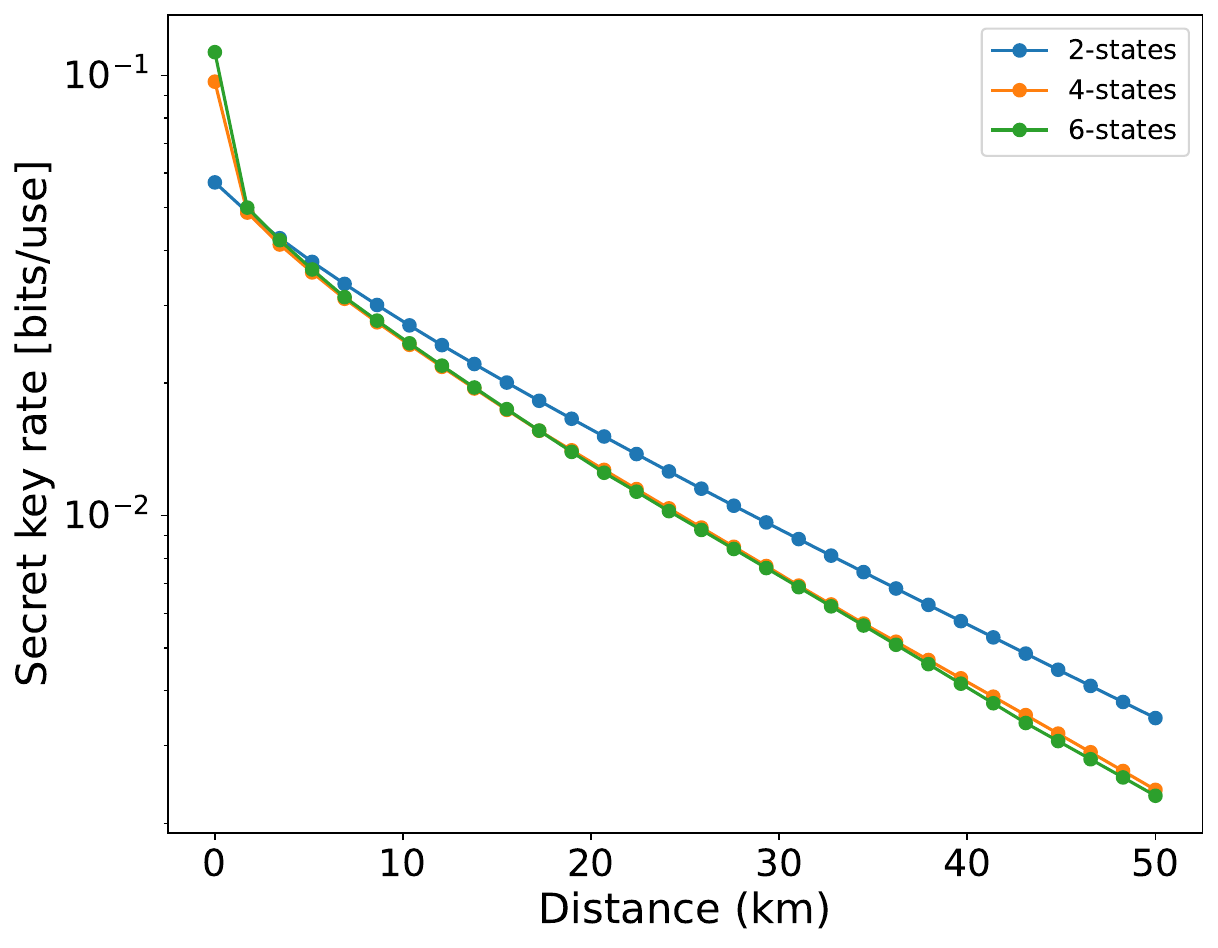}
    }

    \vspace{1.5ex}

    \subfloat[\label{fig:skralpha2optimized}]{
        \includegraphics[width=0.45\textwidth]{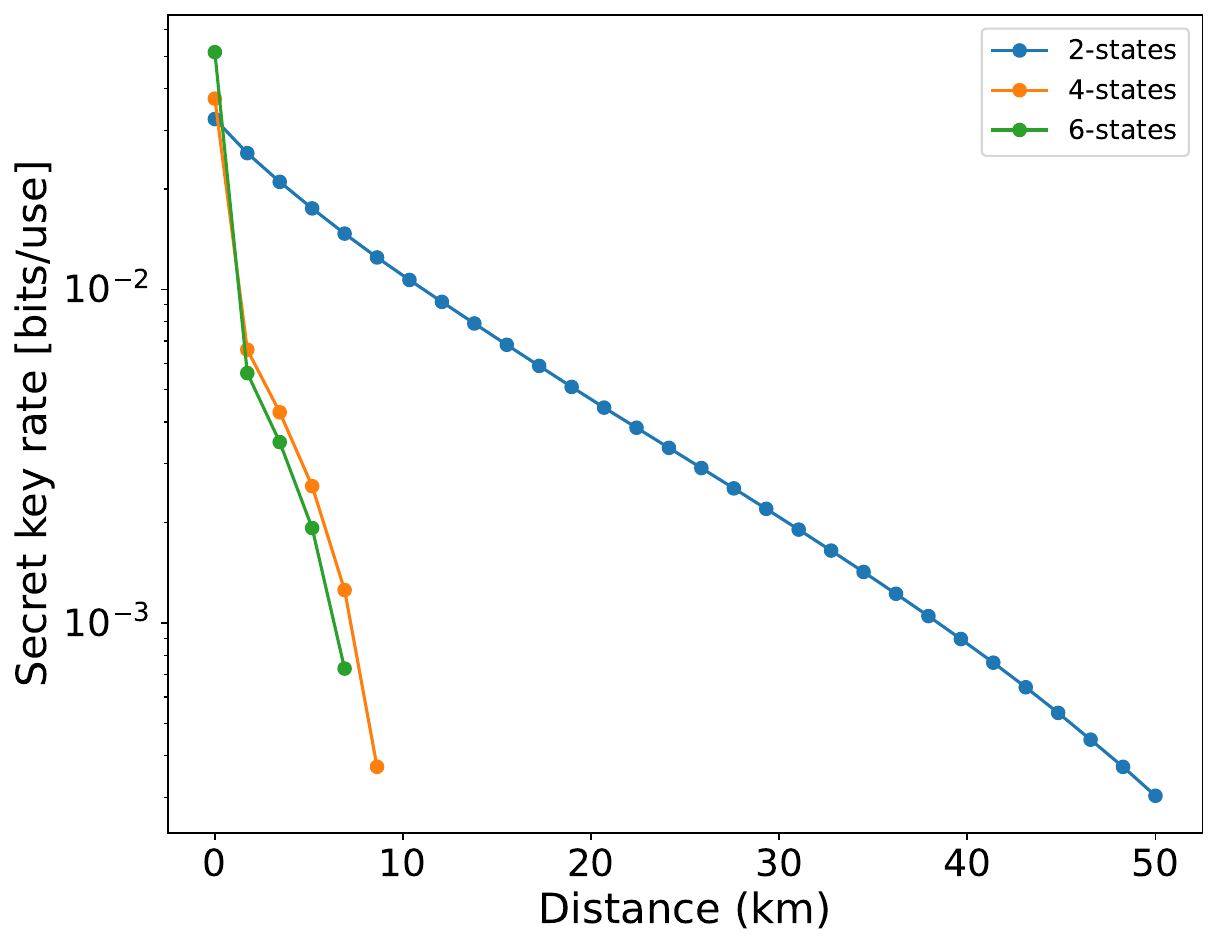}
    }

    \caption{\justifying Secret key rate for optimized variances as a function of distance for the 2-, 4- and 6-state protocols for (a) \(\xi = 0\), (b) \(\xi = 0.005\), and $\beta = 1$. Increasing the modulation variance leads to an overestimation of the Holevo information and therefore to lower SKRs, while at low variances the SKR saturates for constellations with a small number of states.
    }

    \label{optimizedSKR}
\end{figure}

\section{Conclusion}
\label{conclusion}

We have developed a security analysis for 1D DM CV-QKD protocols under the Gaussian extremality assumption by extending the method in Ref.~\cite{Ghorai2019}. By establishing the appropriate symmetry arguments for coherent states distributed along the real axis in phase space, we derived security bounds against collective attacks in the asymptotic regime via semidefinite programming. Our analysis reveals a critical limitation: the Gaussian extremality assumption systematically overestimates Eve's information in this setting, yielding bounds so conservative that secure key extraction becomes impossible for constellations larger than 4 states, even under ideal conditions. This overestimation worsens significantly in the presence of excess noise and restricts the viable range of modulation amplitudes to impractically small values near the vacuum limit.

Notably, this behavior contrasts sharply with the 2D case, where the Gaussian extremality method improves as the constellation size increases~\cite{Denys2021}. In the 2D setting, larger constellations have more symmetries in phase space; that is, the constellation state becomes more isotropic and, consequently, the symmetrization process induces smaller perturbations. Consequently, the state of the constellation approaches a Gaussian state, and the Gaussian extremality assumption yields tighter bounds on Eve's information. Conversely, for the 1D case, increasing the number of modulated states in only one quadrature does not add more symmetries to the constellation state. In fact, increasing the constellation size also increases the modulation variance, and therefore the overall state becomes less similar to a Gaussian state. This implies that computing the Holevo information from the symplectic eigenvalues of the covariance matrix yields excessively loose bounds and, hence, is not appropriate for 1D protocols.

A potential alternative approach worth exploring would be to leverage the success of Gaussian extremality in 2D protocols by starting with a nearly isotropic constellation in the entanglement-based picture and then projecting it onto the 1D subspace through an appropriate choice of Alice's measurement basis. If such a construction were possible, it would allow the security analysis to inherit the favorable properties of 2D discrete modulation — where larger constellations naturally approach isotropy. This idea is somewhat related to the spirit of Ref.\cite{zhao2020unidimensional}, but their choice of measurement basis was incompatible with the required symmetries. Whether a correct measurement basis exists that achieves this projection while preserving the necessary symmetry properties remains an open question. Finally, we note that we have considered only uniformly distributed coherent states along the real axis. Allowing variable spacing or optimized amplitude distributions may improve performance even under the Gaussian extremality assumption and warrants further investigation.

\begin{acknowledgments}
We thank Marcelo Terra Cunha for insightful discussions. This work was partially funded by the project ``Security analyses of P\&M CV-QKD protocols'' supported by QuIIN - Quantum Industrial Innovation, EMBRAPII CIMATEC Competence Center in Quantum Technologies, with financial resources from the PPI IoT/Manufatura 4.0 of the MCTI grant number 053/2023, signed with EMBRAPII. MAD thanks the European Union for financing (HORIZON-MSCA-2023 Postdoctoral Fellowship, 101153602 - COCoVaQ).
\end{acknowledgments}

\bibliographystyle{quantum}
\bibliography{cvqkd}

\appendix

\section{Physicality region}
\label{physregion}
The uncertainty principle equation
\begin{align}
    \gamma_{AB} + i \Omega &\geq 0,
\end{align}
implies that
\begin{equation}
    \det(\gamma_{AB})\geq \Delta-1.
\end{equation}

By substituting Eqs.\eqref{eq:delta} and \eqref{eq:det} we can obtain the physicality region for the parameter $C_p$ as a function of the covariance matrix parameters in Eq.\eqref{CM}:

\begin{widetext}
\begin{equation}
\begin{split}
    VWW_p-VWC_p^2+C_q^2C_p^2-C_q^2W_p &\geq V+WW_p+2C_qC_P-1 \\
    (VW-C_q^2)W_p-(VW-C_q^2)C_p^2 &\geq V+WW_p+2C_qC_p-1\\
    (VW-C_q^2)W_p-V-WW_p+1 &\geq (VW-C_q^2)C_p^2+2C_qC_P+\frac{C_q^2}{VW-C_q^2}-\frac{C_q^2}{VW-C_q^2}\\
    (VW-C_q^2)W_p-V-WW_p+1 &\geq \left(\sqrt{VW-C_q^2}C_p+\frac{C_q}{\sqrt{VW-C_q^2}}\right)^2-\frac{C_q^2}{VW-C_q^2}.
\end{split}
\end{equation}
Therefore, 
\begin{equation}\label{eq:physlim}
\begin{split}
    (VW-C_q^2)\left(C_p+\frac{C_q}{VW-C_q^2}\right)^2&\leq \frac{C_q^2}{VW-C_q^2}+(VW-C_q^2)W_p-V-WW_p+1\\
    \left(C_p+\frac{C_q}{VW-C_q^2}\right)^2&\leq \frac{C_q^2}{(VW-C_q^2)^2}+\frac{(VW-C_q^2)W_p-V-WW_p+1}{(VW-C_q^2)}.  
\end{split}
\end{equation}
Finally,
\begin{equation}
    \left(C_p+C_0\right)^2\leq \left(1-\frac{W}{V}W_0\right)\left(W_p-W_0\right),
\end{equation}
with $C_0=\frac{C_q}{VW-C_q^2}$, $W_0=\frac{V}{VW-C_q^2}$.
\end{widetext}

\end{document}